\theoremstyle{plain}
\newtheorem{thm}{\protect\theoremname}
\theoremstyle{definition}
\newtheorem{defn}[thm]{\protect\definitionname}
\theoremstyle{remark}
\newtheorem{claim}[thm]{\protect\claimname}
\theoremstyle{plain}
\newtheorem{cor}[thm]{\protect\corollaryname}
\providecommand{\claimname}{Claim}
\providecommand{\corollaryname}{Corollary}
\providecommand{\definitionname}{Definition}
\providecommand{\theoremname}{Theorem}
\begin{document}
\title{A note on differentially private clustering with large additive error}
\author{Huy L. Nguyen\\
Northeastern University}
\maketitle
\begin{abstract}
In this note, we describe a simple approach to obtain a differentially
private algorithm for $k$-clustering with nearly the same multiplicative
factor as any non-private counterpart at the cost of a large polynomial
additive error. The approach is the combination of a simple geometric
observation independent of privacy consideration and any existing
private algorithm with a constant approximation.
\end{abstract}

\section{Clustering in low dimensions}

In this note, we consider the problem of finding an approximate clustering
solution with differential privacy in Euclidean space. The problem
has been studied extensively with many different objective functions.
Some of the popular ones include the $k$-median objective and the
$k$-mean objective. Recently the work \cite{GKM20} gave algorithms
for these objectives achieving almost the same multiplicative error
as any non-private counterpart and a large polynomial additive error.
In this note, we describe a simple alternative approach to achieve
a similar result. For concreteness, we focus on the $k$-median objective
but a similar proof also works for $k$-mean objective.
\begin{defn}
In the Euclidean $k$-median problem, we are given a dataset $D$
of $n$ points in $\mathbb{R}^{d}$. The goal is to find a set $S$
of $k$ centers to minimize the following objective:

\[
\min_{S}\sum_{p\in D}d(p,S)=\min_{S}\sum_{p\in D}\min_{c\in S}d(p,c)
\]

where $d(p,q)$ denotes the Euclidean distance between two points
$p$ and $q$. We use $d(p,S)$ as the shorthand for $\min_{q\in S}d(p,q)$.
\end{defn}

A major part of their work is in developing a private bi-criteria
algorithm for points in $\mathbb{R}^{d}$ with $poly\left(k,\log n,2^{d}\right)$
centers and clustering cost at most $\epsilon$ times the optimal
cost plus a polynomial additive error. We show that this result can
be obtained using a simple observation independent of privacy consideration.
Note that the observation holds more generally for metric spaces with
doubling dimension $d$.
\begin{claim}
Consider a dataset $D$ of $n$ points in $\mathbb{R}^{d}$ and a
constant $\epsilon\in(0,1/2]$. Let $O_{k}$ be the optimal $k$-median
solution and $OPT_{k}$ be the optimal $k$-median cost for the dataset.
Then for a certain $k'=k(1/\epsilon)^{O(d)}\log(n/\epsilon)$, we
have $OPT_{k'}\le O\left(\epsilon OPT_{k}\right)$.
\end{claim}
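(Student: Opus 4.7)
The plan is to exhibit an explicit set $S'$ of $k' = k(1/\epsilon)^{O(d)}\log(n/\epsilon)$ centers whose $k$-median cost on $D$ is $O(\epsilon \cdot OPT_k)$; this directly bounds $OPT_{k'}$. The construction starts from the optimal $k$-median solution $O_k = \{c_1, \dots, c_k\}$ and the induced partition $C_1, \dots, C_k$, and refines each cluster at multiple geometric scales.

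The key geometric input is that, in $\mathbb{R}^d$ (and more generally in a metric of doubling dimension $d$), any ball of radius $R$ can be covered by $(1/\epsilon)^{O(d)}$ balls of radius $\epsilon R$. Using this, for each cluster $C_i$ and each integer $j$, I would cover the annulus $A_{i,j} = \{p \in C_i : 2^j \le d(p,c_i) < 2^{j+1}\}$ by $(1/\epsilon)^{O(d)}$ balls of radius $\epsilon \cdot 2^{j+1}$ and place a candidate center at each of their centers. Then any $p \in A_{i,j}$ has some center in $S'$ within distance $\le 2\epsilon \cdot 2^j \le 2\epsilon \cdot d(p,c_i)$, so charging $p$ to that center costs at most $2\epsilon \cdot d(p,c_i)$. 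Summed over all points, this contributes at most $O(\epsilon) \cdot \sum_p d(p,O_k) = O(\epsilon \cdot OPT_k)$.

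To control the number of scales, I would also include the original centers $c_1, \dots, c_k$ in $S'$ and truncate the range of $j$. Since $\max_{p \in D} d(p, c_{\sigma(p)}) \le OPT_k$ (one term cannot exceed the total), we may take $j \le \lceil \log_2 OPT_k \rceil$. At the other end, any point with $d(p,c_i) \le \epsilon \cdot OPT_k / n$ can simply be assigned to $c_i$; the total cost of these ``close'' points is at most $\epsilon \cdot OPT_k$. Thus only $O(\log(n/\epsilon))$ scales per cluster are needed, and the total center count becomes $k \cdot O(\log(n/\epsilon)) \cdot (1/\epsilon)^{O(d)} = k(1/\epsilon)^{O(d)}\log(n/\epsilon)$, as required.

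The only mildly delicate step is bounding the scale range, because $OPT_k$ can in principle be $0$ or very small and the minimum positive interpoint distance can be tiny; I would handle degenerate cases (e.g., $OPT_k = 0$, in which case $O_k$ already witnesses $OPT_{k'}=0$) separately, and otherwise the ``close points'' argument absorbs all scales below $\epsilon\,OPT_k/n$ into the additive $\epsilon\,OPT_k$ slack. Everything else is a routine summation: the cost bound is a sum over points, each bounded by $O(\epsilon)$ times its own distance to $O_k$, and the center count is a product of the three factors above. No step seems to require more than elementary doubling-dimension covering arguments.
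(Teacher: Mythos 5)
Your proposal is correct and follows essentially the same argument as the paper: a multi-scale covering of the neighborhoods of the optimal centers by $(1/\epsilon)^{O(d)}$ small balls at $O(\log(n/\epsilon))$ geometric scales, keeping the original centers to absorb points closer than $\epsilon\,OPT_k/n$, and charging each remaining point $O(\epsilon)$ times its own distance to $O_k$. The only differences are cosmetic --- you cover annuli at powers of $2$ where the paper covers nested balls at thresholds growing by factors of $(1+\epsilon)$ --- and your explicit treatment of the $OPT_k=0$ degenerate case is a small point the paper glosses over.
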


\begin{proof}
Suppose $O_{k}=\{c_{1},\ldots,c_{k}\}$ and suppose the optimal cost
is $Rn$. We will construct a new solution $S$ with $k'$ centers.
Let $T$ be the set of exponentially growing thresholds $T=\{\epsilon R,\epsilon R(1+\epsilon),\epsilon R(1+\epsilon)^{2},\ldots,nR\}$.
For each center $c_{i}$ and threshold $t\in T$, we cover the ball
$B(c_{i},t)$ (the ball centered at $c_{i}$with radius $t$) using
balls of radius $\epsilon t$ and include all the centers in the solution
$S$. We also include all $c_{i}$ in $S$. It is clear that $|S|=k(1/\epsilon)^{O(d)}|T|=k(1/\epsilon)^{O(d)}\log(n/\epsilon)$.

Next we show that the clustering cost of $S$ is at most $O(\epsilon Rn)$.
Consider a point $p$ in the dataset at distance $r=d(p,O_{k})$ from
its nearest center $c_{i}$ in $O_{k}$. If $r\le\epsilon R$ then
we just note that its distance to the nearest center in $S$ is also
at most $r$ (since $c_{i}\in S$). If $\epsilon R<r\le nR$ then
consider the minimum threshold $t\in T$ such that $t\ge r$. Since
$p\in B(c_{i},t)$, we include a center at distance at most $\epsilon t$
from $p$. By the minimality of $t$, we have $t\le(1+\epsilon)r$.
Thus, $p$ is at most $(1+\epsilon)\epsilon r$ away from some center
in $S$. The total clustering cost for $S$ is bounded by

\[
\sum_{p\in D}d(p,S)\le\left(\sum_{p\in D,d(p,O_{k})>\epsilon R}(1+\epsilon)\epsilon\cdot d(p,O_{k})\right)+n\epsilon R\le(1+\epsilon)\epsilon nR+n\epsilon R\le3n\epsilon R
\]
\end{proof}
Combining the above observation with an arbitrary private constant
approximation algorithm for $k$ median such as \cite{KS18} we obtain
the following result:
\begin{cor}
There is a $(\epsilon_{p},\delta_{p})$-differentially private algorithm
that works on data in the unit ball in $\mathbb{R}^{d}$ and outputs
$k'=k(1/\epsilon)^{O(d)}\log(n/\epsilon)$ centers such that the $k'$-median
clustering cost is at most $O(\epsilon OPT_{k})+poly\left(k,\log n,(1/\epsilon)^{d}\right)\log(1/\delta_{p})/\epsilon_{p}$
with probability at least $1-1/n^{2}$.
\end{cor}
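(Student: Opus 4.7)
The plan is to invoke the claim as a deterministic post-processing step after running a private constant-factor approximation algorithm for $k$-median, namely \cite{KS18}, and to let the privacy guarantee transfer by closure of differential privacy under post-processing.

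First I would run the \cite{KS18} algorithm on $D$ with parameters $(\epsilon_p,\delta_p)$. This returns, with probability at least $1-1/n^2$, a set $\tilde{S}=\{\tilde{c}_1,\ldots,\tilde{c}_k\}$ of $k$ centers whose cost satisfies $\sum_{p\in D}d(p,\tilde{S})\le C\cdot OPT_k+A$ for an absolute constant $C$ and additive error $A=\mathrm{poly}(k,d,\log n)\cdot\log(1/\delta_p)/\epsilon_p$. This step consumes the entire privacy budget.

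Next I would apply the claim's cover construction, but centered at $\tilde{S}$ rather than at the unknown optimal centers $O_k$. The essential modification is that the threshold set $T$ must be fixed in advance, independently of the data, so that the whole second stage is post-processing of $\tilde{S}$. Since $D$ sits in the unit ball, every relevant distance is at most $2$, so I fix $T=\{\tau_0,\tau_0(1+\epsilon),\tau_0(1+\epsilon)^2,\ldots,2\}$ with, say, $\tau_0=1/n^2$; this gives $|T|=O(\log(n/\epsilon)/\epsilon)$. For each $\tilde{c}_i$ and each $t\in T$, I cover $B(\tilde{c}_i,t)$ with balls of radius $\epsilon t$ and add their centers, together with the $\tilde{c}_i$ themselves, to the output $S$, yielding $|S|=k(1/\epsilon)^{O(d)}\log(n/\epsilon)$ after absorbing the extra $1/\epsilon$ into the doubling-dimension factor.

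The cost analysis is a direct rerun of the claim's argument with $\tilde{S}$ in place of $O_k$: each point $p$ with $r=d(p,\tilde{S})>\tau_0$ has some center in $S$ within $\epsilon(1+\epsilon)r\le 2\epsilon r$, while points with $r\le\tau_0$ contribute at most $n\tau_0\le 1/n$ in total. Summing gives
\[
\sum_{p\in D}d(p,S)\le 2\epsilon(C\cdot OPT_k+A)+1/n=O(\epsilon\cdot OPT_k)+O(A),
\]
which matches the claimed bound. Privacy is immediate because the post-processing step is a deterministic function of $\tilde{S}$ only. The main (and really only) subtlety is the data-independence of $T$: this is precisely what the unit-ball assumption buys us, since the data-dependent range used in the claim's proof (which depends on $R=OPT_k/n$) would otherwise leak information about $D$ and would require a separate private estimate of $OPT_k$.
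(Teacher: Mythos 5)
Your proposal is correct and follows exactly the route the paper intends: run a private constant-factor $k$-median algorithm such as \cite{KS18}, then apply the covering construction from the Claim to its output, with privacy following by closure under post-processing. You also correctly identify and repair the one detail the paper glosses over --- the threshold set $T$ in the Claim's proof depends on $R=OPT_k/n$ and must be replaced by a data-independent grid, which is exactly what the unit-ball assumption makes possible.
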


\section{Clustering in high dimensions}

For completeness, we include a brief description of the remaining
steps to obtain an approximate solution using the bi-criteria solution.
\begin{thm}
Suppose there is a non-private algorithm with $\alpha$ approximation
for $k$-median in $\mathbb{\mathbb{R}}^{d}$. As a consequence, there
is an $(\epsilon_{p},\delta_{p})$-private algorithm for data in $B(0,1)$
that finds a solution with $k$-median cost $(\alpha+O(\epsilon))OPT_{k}+poly\left((k/\epsilon)^{\log(1/\epsilon)/\epsilon^{2}},\log n\right)\cdot d\log(1/\delta_{p})/\epsilon_{p}$
with probability $1-1/k$.
\end{thm}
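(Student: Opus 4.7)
The theorem combines three ideas: (i) a dimension reduction that makes the Corollary applicable regardless of the ambient $d$, (ii) a private coreset built around the bi-criteria centres, and (iii) a single black-box call to the non-private $\alpha$-approximation.

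First, I would apply a publicly-chosen Johnson-Lindenstrauss map $\Pi\colon\mathbb{R}^d\to\mathbb{R}^{d'}$ with $d'=O(\epsilon^{-2}\log(k/\epsilon))$. The Makarychev-Manoj-Razenshteyn dimension-reduction theorem for $k$-median guarantees that, with probability at least $1-1/k$, every $k$-tuple of centres has its clustering cost preserved within $(1\pm\epsilon)$; in particular $OPT_k(\Pi D)=(1\pm O(\epsilon))\,OPT_k(D)$. Since $\Pi$ is data-independent no privacy is spent. Applying the Corollary of Section~1 to $\Pi D\subset\mathbb{R}^{d'}$ with budget $(\epsilon_p/2,\delta_p/2)$ returns $k'=k(1/\epsilon)^{O(d')}\log(n/\epsilon)=(k/\epsilon)^{O(\log(1/\epsilon)/\epsilon^2)}\log n$ centres $B=\{b_1,\dots,b_{k'}\}\subset\mathbb{R}^{d'}$ with $k'$-median cost on $\Pi D$ at most $O(\epsilon)\,OPT_k+\mathrm{poly}(k',\log n)\log(1/\delta_p)/\epsilon_p$.

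Next I would lift back to $\mathbb{R}^d$. Partition $D$ into $D_1,\dots,D_{k'}$ by sending each $p$ to the $b_j$ closest to $\Pi p$; then, with the remaining $(\epsilon_p/2,\delta_p/2)$-budget, release noisy counts $\hat w_j\approx|D_j|$ via the Laplace mechanism ($\ell_1$-sensitivity $1$ on the concatenated $k'$-tuple) and noisy vector sums $\hat s_j\approx\sum_{p\in D_j}p\in\mathbb{R}^d$ via the Gaussian mechanism ($\ell_2$-sensitivity $1$ on the concatenated output in $\mathbb{R}^{k'd}$, since one point lies in $B(0,1)$ and affects a single cluster). Set $\hat m_j=\hat s_j/\max(\hat w_j,1)$ and feed the weighted instance $\{(\hat m_j,\hat w_j)\}_{j=1}^{k'}$ in $\mathbb{R}^d$ to the non-private $\alpha$-approximation, obtaining $k$ centres $\tilde S$.

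For the cost analysis I would use a standard snap-and-lift sandwich. For any solution $S$, the cost of $S$ on $D$ differs from its (weighted) cost on $\{(\hat m_j,\hat w_j)\}$ by at most $\sum_j\sum_{p\in D_j}\|p-\hat m_j\|$ plus noise terms of the form $\sum_j\|s_j-\hat s_j\|$ and $\sum_j|w_j-\hat w_j|$. The within-cluster scatter $\sum_j\sum_{p\in D_j}\|p-m_j\|$ is at most twice the 1-median cost of $D_j$ with centre at the pre-image of $b_j$, and by the JL guarantee this is within $(1+O(\epsilon))$ of twice the bi-criteria cost from step one. Applying the sandwich to $O_k$ and to $\tilde S$ and combining with $\mathrm{cost}(\tilde S,\text{coreset})\le\alpha\,\mathrm{cost}(O_k,\text{coreset})$ yields
\[
\mathrm{cost}(\tilde S,D)\le\alpha\,OPT_k+(1+\alpha)\cdot(\text{bi-criteria cost})+\text{noise}=(\alpha+O(\epsilon))\,OPT_k+\mathrm{poly}\!\left((k/\epsilon)^{\log(1/\epsilon)/\epsilon^2},\log n\right)\cdot d\log(1/\delta_p)/\epsilon_p,
\]
the factor of $d$ arising from the Gaussian noise applied coordinate-wise in $\mathbb{R}^d$. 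The main obstacle is the first inequality of the analysis: bounding the \emph{high-dimensional} within-partition scatter by the \emph{low-dimensional} bi-criteria cost. This needs the Makarychev-Manoj-Razenshteyn guarantee (or an analogue) interpreted simultaneously for all $k'$ clusters as 1-median instances, not merely for the single $k$-median problem. Once this is secured, the rest is routine privacy composition of the two stages and triangle-inequality bookkeeping of the noise.
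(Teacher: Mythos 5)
Your skeleton agrees with the paper up through the bi-criteria step (public JL projection, then the Corollary on the projected data), but from there you diverge, and the divergence is exactly where the gap you yourself flag lives --- and that gap does not appear closable along your route. You lift the coreset to $\mathbb{R}^{d}$ by releasing noisy sums and counts per bi-criteria cluster and then run the non-private algorithm in the \emph{high-dimensional} space. The coreset error of that lift is the within-partition scatter $\sum_{j}\sum_{p\in D_{j}}\|p-\hat{m}_{j}\|$ measured in $\mathbb{R}^{d}$, and to charge it to the low-dimensional bi-criteria cost you need the projection to preserve the cost of a \emph{$k'$-clustering}, not a $k$-clustering. The guarantee of \cite{MMR19} at that granularity requires $d'=\Omega(\epsilon^{-2}\log k')$, while $k'=k(1/\epsilon)^{\Theta(d')}\log(n/\epsilon)$, so $d'\gtrsim\epsilon^{-2}d'\log(1/\epsilon)$: the requirement is circular and unsatisfiable for small $\epsilon$, even if you only ask for constant-factor preservation. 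The fallback of a per-point union bound forces $d'=\Omega(\log n)$, which inflates $k'$ to $k\cdot n^{\Theta(\log(1/\epsilon))}$ and destroys the additive error. So the "main obstacle" you name is not a technicality to be secured later; it is the reason the algorithm has to be organized differently.

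The paper's proof avoids the issue by never lifting the $k'$ coreset points to $\mathbb{R}^{d}$. It runs the non-private $\alpha$-approximation \emph{in the projected space} on the snapped weighted instance (the $k'$ centers with noisy multiplicities), obtaining a partition of the data into only $k$ clusters; only then does it return to the original space, privately computing an approximate $1$-median for each of the $k$ final clusters via private convex ERM \cite{BST14}, which is where the factor $d$ in the additive error comes from (in your version it comes from Gaussian noise on the sums). With this ordering the JL guarantee is only ever invoked for $k$-clusterings, which is precisely what \cite{MMR19} gives at $d'=O(\epsilon^{-2}\log k)$, and no statement about $k'$ clusters is needed. A secondary omission in your write-up: the paper clips the projected data to $B(0,\log n)$ before running the private bi-criteria algorithm, because the privacy (not just utility) of that subroutine requires a bounded domain even in the low-probability event that the random projection sends some point far away; your proposal spends no budget on, and offers no substitute for, that safeguard.
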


\begin{proof}
The algorithm follows similar steps as those of Balcan et al. for
$k$-means \cite{BDLMZ17}.
\end{proof}
\begin{enumerate}
\item Project the data to $d'=O(\epsilon^{-2}\log k)$ dimensions and project
the results to the ball $B(0,\log n)$.
\item Run a $(\epsilon_{p}/3,\delta_{p})$-private algorithm on the projected
data to find a bi-criteria solution with $k'=k(1/\epsilon)^{O(d')}\log(n/\epsilon)$
centers.
\item Use the Laplace mechanism to compute the approximate number of points
assigned to each center.
\item Run a non-private algorithm on a new dataset where the points are
the $k'$ centers and each center has multiplicity equal to the approximate
number of points assigned to it i.e. snapping each point to its nearest
center.
\item Partition the data according to each point's closest center produced
in step 4. For each cluster, use a private algorithm to recover an
approximate optimal center in the original high dimensions.
\end{enumerate}
By \cite{MMR19}, projecting to $d'=O(\epsilon^{-2}\log k)$ dimensions
using a random Gaussian matrix preserves the clustering cost within
a $1+\epsilon$ factor with probability $1-1/k^{2}$. By the standard
argument using the concentration of the $\chi^{2}$ distribution,
with probability $1-1/n^{2}$, the resulting points are also contained
within the ball $B(0,\log n)$ in $\mathbb{R}^{d'}$. Thus, with probability
$1-1/n^{2}$, the step of projecting to the ball $B(0,\log n)$ does
not move any point. The reason we include this step is to protect
privacy in the low probability event where the projection fails.

In step 2, the algorithm produces a solution with cost $O(\epsilon OPT_{k})+poly\left((k/\epsilon)^{\log(1/\epsilon)/\epsilon^{2}},\log n\right)\log(1/\delta_{p})/\epsilon_{p}$.

In step 3, the number of points at each center is accurate up to additive
error $poly\left((k/\epsilon)^{\log(1/\epsilon)/\epsilon^{2}},\log n\right)/\epsilon_{p}$
per count. Thus, the new dataset has optimal $k$-median cost $(1+O(\epsilon))OPT_{k}+poly\left((k/\epsilon)^{\log(1/\epsilon)/\epsilon^{2}},\log n\right)\log(1/\delta_{p})/\epsilon_{p}$
(the original optimal cost plus the increase due to snapping points
to centers and the inaccurate counts).

In step 4, the non-private clustering algorithm produces a solution
with cost $\alpha(1+O(\epsilon))OPT_{k}+\alpha poly\left((k/\epsilon)^{\log(1/\epsilon)/\epsilon^{2}},\log n\right)\log(1/\delta_{p})/\epsilon_{p}$.

In step 5, we can use the private convex empirical risk minimization
algorithm \cite{BST14} to compute the approximate 1-median solution
for each cluster separately. The algorithm works for convex Lipschitz
risk function and the 1-median cost function is a convex 1-Lipschitz
function. The algorithm has additive error $poly(k)\cdot d/\epsilon_{p}$.

The result follows by adding up the costs in steps 4 and 5.

\bibliographystyle{plain}
\bibliography{kmedian}

\begin{thebibliography}{1}

\bibitem{BDLMZ17}
Maria{-}Florina Balcan, Travis Dick, Yingyu Liang, Wenlong Mou, and Hongyang
  Zhang.
\newblock Differentially private clustering in high-dimensional euclidean
  spaces.
\newblock In Doina Precup and Yee~Whye Teh, editors, {\em Proceedings of the
  34th International Conference on Machine Learning, {ICML} 2017, Sydney, NSW,
  Australia, 6-11 August 2017}, volume~70 of {\em Proceedings of Machine
  Learning Research}, pages 322--331. {PMLR}, 2017.

\bibitem{BST14}
Raef Bassily, Adam~D. Smith, and Abhradeep Thakurta.
\newblock Private empirical risk minimization: Efficient algorithms and tight
  error bounds.
\newblock In {\em 55th {IEEE} Annual Symposium on Foundations of Computer
  Science, {FOCS} 2014, Philadelphia, PA, USA, October 18-21, 2014}, pages
  464--473. {IEEE} Computer Society, 2014.

\bibitem{GKM20}
Badih Ghazi, Ravi Kumar, and Pasin Manurangsi.
\newblock Differentially private clustering: Tight approximation ratios, 2020.

\bibitem{KS18}
Haim Kaplan and Uri Stemmer.
\newblock Differentially private k-means with constant multiplicative error.
\newblock In Samy Bengio, Hanna~M. Wallach, Hugo Larochelle, Kristen Grauman,
  Nicol{\`{o}} Cesa{-}Bianchi, and Roman Garnett, editors, {\em Advances in
  Neural Information Processing Systems 31: Annual Conference on Neural
  Information Processing Systems 2018, NeurIPS 2018, 3-8 December 2018,
  Montr{\'{e}}al, Canada}, pages 5436--5446, 2018.

\bibitem{MMR19}
Konstantin Makarychev, Yury Makarychev, and Ilya~P. Razenshteyn.
\newblock Performance of johnson-lindenstrauss transform for \emph{k}-means and
  \emph{k}-medians clustering.
\newblock In Moses Charikar and Edith Cohen, editors, {\em Proceedings of the
  51st Annual {ACM} {SIGACT} Symposium on Theory of Computing, {STOC} 2019,
  Phoenix, AZ, USA, June 23-26, 2019}, pages 1027--1038. {ACM}, 2019.

\end{thebibliography}

\end{document}